\pgfplotsset{compat=1.10}
\pgfplotsset{compat=1.17}
\numberwithin{equation}{section}
\theoremstyle{plain}
\newtheorem{theorem}{Theorem}[section]
\newtheorem{proposition}[theorem]{Proposition}
\theoremstyle{definition}
\newtheorem{definition}[theorem]{Definition}
\newtheorem{example}[theorem]{Example}
\newtheorem*{axiom*}{Axiom}
\newtheorem{axiom}{Axiom}
\theoremstyle{remark}
\newtheorem{remark}[theorem]{Remark}
\newcommand{\R}{\mathbb{R}}
\newcommand{\N}{\mathbb{N}}
\title{Revisiting the Measurement of Polarization}
\author{Juan A. Crespo}
\address{Departamento de Economía Cuantitativa, Universidad Aut\'onoma de Madrid}
\author{Armajac Raventós-Pujol}
\address{Departamento de Econom\'ia, Universidad Carlos III de Madrid }
\email{juan.crespo@uam.es, aravento@eco.uc3m.es }
\keywords{Polarization, polarization measures, unidimensional polarization, inequality, group identity, axiomatics}
\thanks{Both authors are partially supported by Comunidad de Madrid (Spain) excelence grant H2019/HUM-5891 and Spanish MICIU  grant PID2024-161056NB-I00. Second author is partially supported by grants of the Spanish MINECO  CEX2021-001181-M and MICIU/AEI /10.13039/501100011033. They are deeply indebted to Ignacio Ortuño and Klaus Desmet for their insightful questions and comments which greatly improved this paper. They also thank Shlomo Weber, Juan de Dios Moreno and Antonio Cabrales for their helpful suggestions. The paper was presented at the 13th Conference on Economic Design, the 48th Simposio de Análisis Económico, and in seminars at Universidad Carlos III and Universidad Autónoma de Madrid. The authors are grateful to all attendees for their valuable feedback.}
\begin{document}

\begin{abstract}
We revisit Esteban and Ray’s (1994) seminal model of polarization. Their main result (unnecessarily) relies on the assumption that individuals are infinitely divisible, which imposes strong restrictions on admissible polarization indices. We show that relaxing this assumption yields a broader family of indices consistent with the original axioms. The resulting indices avoid counter-intuitive rankings that arise when using results on the original paper and provide greater flexibility for empirical applications.
\end{abstract}
	\maketitle


\section*{Introduction}

It has been 30 years since Esteban and Ray’s (1994) seminal contribution to the measurement of polarization~\cite{10.2307/2951734}. In their model, antagonism in society depends on the alienation individuals feel toward members of other groups, as well as on their identification with their own group. They introduce a set of axioms that formalize the concept of polarization, thereby narrowing the set of admissible measures interpretable as polarization indices. The Esteban–Ray measure has since become well established and widely applied in economics and political science to address diverse empirical and theoretical questions~\cite{Aghion2004, Alesina2003, https://doi.org/10.3982/ECTA13734, 10.1257/aer.20180174, 10.1257/app.20190309, https://doi.org/10.3982/ECTA11237, 10.1257/00028280260344722, 10.1257/aer.20160812}.\\

The present paper makes three contributions. First, using the original model and axiomatization, we show that the concept of polarization admits a broader set of measures with fewer parameter restrictions. By considering that individuals are infinitely divisible, Esteban and Ray (1994) make a simplifying assumption that is not innocuous: it sharply limits the set of feasible polarization indices. Second, we provide examples in which these restrictive measures yield counterintuitive rankings of societies. In these cases, a more flexible antagonism function, which is consistent with the original model and axiomatization, produces rankings that align better with intuitive judgments. Third, we show that expanding the feasible set of indices gives empirical researchers greater flexibility. Depending on the context, a researcher may wish to adopt different functional forms or parameterization for how antagonism increases with group size or with inter-group distance.\\

To identify the broader set of polarization indices consistent with the Esteban–Ray model and axiomatization, we begin by removing the assumption that individuals are infinitely divisible. We then re-examine the role of Condition H and Axioms 1–3. We show that Condition H is equivalent to requiring that the antagonism function decomposes as the product of identification and alienation, where identification is a non-negative power function of group size. We then prove that Axiom 1 is equivalent to requiring that \emph{the role of identification cannot be neutral}. Axiom 2 is equivalent to assuming that \emph{the marginal effect of distance on alienation is non-decreasing}. Finally, while Axiom 1 concerns identification and Axiom 2 concerns alienation, Axiom 3 rules the relationship between the two. More specifically, once an alienation function has been set, Axiom 3 restricts the feasible range of powers for the identification function and vice versa.
These three equivalences allow us to view the axioms from a new perspective: whereas Esteban and Ray formulated them using a \emph{macro perspective}, namely, how polarization depends on the shape of social distributions, they can now be reformulated through a \emph{micro perspective}, namely, the properties of individual antagonism functions.

Not only that, when one does not admit that  individuals are infinitely divisible. the precise characterization we obtain of antagonism functions enables us to construct new examples  of indices (see Section~\ref{new}) that satisfy Condition H and the axioms.    Those measures of polarization cannot be extended to the domains for populations used by ER, namely the positive real numbers. Thus, these examples show that the simplification made in \cite{10.2307/2951734} is the reason  they obtained such a restrictive family of polarization indices\footnote{ In words by ER: \emph{``This theorem dramatically narrows the class of allowable polarization measures."}} and, therefore, that it was a technical assumption that was not harmless.\\

Our paper is not a theoretical quibble about a technical assumption, we show the restricted set of indices obtained by Esteban and Ray leads to counter-intuitive results. Specifically, we compare distributions of two societies that are easily rankable based on our intuitive understanding of which is more polarized. We then show that Esteban–Ray indices would rank these societies erroneously, whereas the broader set of indices we identify would allow for correct rankings (see Section~\ref{expl} for details).\\

In that sense, our findings are important for empirical work, as they allow for a broader range of possibilities when modeling the effects of group size and intergroup distance on antagonism. For instance, we suggest revisiting articles that employed the Esteban–Ray index and dismissed the role of polarization in certain phenomena (e.g., the effect of political polarization on government size~\cite{LINDQVIST_OSTLING_2010} or of religious polarization on conflict~\cite{ABUBADER2019102248}).

Moreover, the decoupling of axioms and our macro–micro equivalences sheds light on the interpretation of the axioms. If one of the axioms does not fit the observed properties of antagonism, the practitioner may disregard that axiom while continuing to rely on the remaining ones. This departs from Esteban and Ray’s approach, in which all axioms were intertwined in proving the characterization theorem. For example, it has been shown that in some contexts the marginal effect of distance on alienation becomes negligible at large distances (e.g. \cite{10.1162/JEEA.2009.7.6.1291}). Using our equivalences, we can see that this assumption violates only Axiom 2 because, in this case, the alienation function would not be convex. In that case, one may remove Axiom 2, adopt an alienation function with a plateau beyond a threshold, and obtain a polarization index that still satisfies Condition H and Axioms 1 and 3.\\

The paper is organized as follows. Section \ref{model} reintroduces the model of Esteban and Ray. Section \ref{H} is devoted to study Condition H and its effect on antagonism functions. Section \ref{Axioms} provides necessary and sufficient conditions for each of Esteban and Ray’s three axioms in terms of antagonism. Section \ref{new} introduces new examples of polarization indices. Finally, Section \ref{expl} compares the performance of these new indices with the classical Esteban–Ray indices in specific distributions.

\section{The Model by Esteban and Ray}\label{model}

As mentioned in the introduction, Esteban and Ray's model builds on the individual perception of the effective antagonism that each member of a society feels toward others. We review the model, focusing on the domains in which polarization indices are defined, avoiding the assumption that individuals within each group are infinitely divisible, and adjusting the axioms accordingly.

\subsection{Distributions and polarization indices}

Assume that a society consists of a finite number of individuals, each characterized by a continuous numerical variable $Y$.\footnote{In ER, the characteristic is the logarithm of income. However, their indices have been applied to several other variables.} Let $\bm y=(y_1, y_2, \dots, y_n) \in \R^n$ denote the vector of observed values for the characteristic $Y$, and let $\bm \pi=(\pi_1, \pi_2, \ldots, \pi_n)$ denote the corresponding \emph{absolute numbers} of individuals with these characteristics, so that $N=\sum_i \pi_i$ is the total population. The absolute number of individuals in each group depends on the units used to measure individuals, for instance, $p=1, 10, 10^6, \ldots$ individuals. Define $u=\frac{1}{p}$ as the minimal value of $\pi_i$. We denote by $\N_u$ the set of (rational) numbers of the form $n u$, where $n\in \N$, and then $\bm \pi=(\pi_1, \pi_2, \ldots, \pi_n) \in {(\N_u)_{++}}^n$. It is crucial to note that $(\N_u)_{++}$ is bounded below by $u$, representing a single individual in the fixed units. An element $(\bm{\pi},\bm y)$ will be called a \emph{distribution}.

\begin{definition}\label{pol-index}
Consider a system of population units $u$. A \emph{polarization index} is a function $P: \mathcal D \to \R_+$, where 
\begin{equation*}
	\mathcal{D}=\bigcup_{n\ge2} {(\mathbb{N}_u)_{++}}^n \times \left\{(y_1,\ldots,y_n)\in \mathbb{R}^n: y_i\neq y_j \text{ for all distinct } i,j\in \{1,\ldots, n\}\right\}
\end{equation*}
is the space of all distributions for the fixed system of units $u$.
\end{definition}

For simplicity, we denote by $\mathcal{R}_n$ the set of observed value vectors corresponding to $n$ groups:
\[
\mathcal{R}_n=\left\{(y_1,\ldots,y_n)\in \mathbb{R}^n: y_i\neq y_j \text{ for all distinct } i,j\in \{1,\ldots, n\}\right\}.
\]

\begin{remark}
Esteban and Ray treat individuals within each group as infinitely divisible, that is, $\bm \pi=(\pi_1, \pi_2, \ldots, \pi_n) \in \R^n_{++}$. This assumption appears implausible in the present context. Since, as we will see later, the model relies on the antagonism that an individual $A$ with characteristic $y_A$ feels toward others whose characteristics differ from $y_A$, the group of individuals possessing characteristic $y_A$ must contain at least one individual $A$, implying $\pi_A\geq u$. 

An index $P$ defined for $(\pi,d)\in \R_{+}\times \R_+$ is also an index in the sense of Definition~\ref{pol-index}, simply by restriction. Hence, defining polarization indices on the domain used by ER serves as a tool to construct indices in the proper discrete domain. This domain extension, seemingly introduced by ER for technical convenience, is not innocuous, as we will later show.
\end{remark}

A basic principle that any polarization index should satisfy is the so-called \emph{Condition H}. Essentially, it requires that the ranking of two distributions in terms of polarization remain invariant under a homothecy with positive integral ratio of both population sizes.

\begin{definition}[Condition H]
A polarization index $P$ satisfies Condition H if, for any two distributions $(\bm{\pi},\bm y), (\bm{\pi}',\bm{y}') \in \mathcal D$ such that $P(\bm{\pi},\bm y)\geq P(\bm{\pi}',\bm{y}')$, it holds that
\[
P(\lambda\bm{\pi},\bm y)\geq P(\lambda\bm{\pi}',\bm{y}')
\]
for all $\lambda\in \N_{++}$.
\end{definition}

Our formulation differs slightly from that of Esteban and Ray, who assume the ratio to be a positive real number, that is $\lambda\in \R_{++}$. We restrict $\lambda$ to positive integers to ensure that $(\lambda\bm{\pi},\bm y)\in \mathcal{D}$, i.e., the resulting distribution is also admissible. Nevertheless, whenever the four distributions $(\bm{\pi},\bm y)$, $(\bm{\pi}',\bm{y}')$, $(\lambda\bm{\pi},\bm y)$, and $(\lambda\bm{\pi}',\bm{y}')$ all belong to $\mathcal D$, the implication
\[
P(\lambda\bm{\pi},\bm y)\geq P(\lambda\bm{\pi}',\bm{y}') \implies P(\bm{\pi},\bm y)\geq P(\bm{\pi}',\bm{y}')
\]
also holds.

\subsection{Polarization indices induced by effective antagonism}

We now define the \emph{effective antagonism} that an individual in the society feels toward any other. This is a function $\theta:\mathbb{N}_u\times \mathbb{R}_+\rightarrow \mathbb{R}$ depending on two variables, $(\pi,d)$. For each member of the society, $\pi$ denotes the number of individuals who share exactly the same value of $Y$, and  $d$ is the distance between two individuals (in the Euclidean sense). 

Given a function $\theta(\pi,d)$, the polarization of a society with distribution $(\bm{\pi},\bm y)$ is defined as the sum of all individual effective antagonisms. Thus, given an effective antagonism function $\theta$, we define
\begin{equation}\label{index}
	P(\bm{\pi},\bm y)=\sum_{i=1}^n\sum_{j=1}^n\pi_i\pi_j\theta(\pi_i,|y_i-y_j|), \quad (\bm{\pi},\bm y)\in {(\mathbb{N}_u)_{++}}^n \times \mathcal{R}_n,
\end{equation}
as the polarization index induced by $\theta$.

There are several conditions imposed on $\theta$. Recall that the goal of ER is to introduce an index that captures whether a society can be grouped into very different clusters but that each of these groups is formed by very similar individuals. For this reason, they argue that  \emph{intra-group homogeneity} accentuates polarization as well as it does \emph{inter-group heterogeneity}. This is reflected in the following definition.

\begin{definition}\label{eff-ant}
A function $\theta:\mathbb{N}_u\times \mathbb{R}_+\rightarrow \mathbb{R}$ is an \emph{effective antagonism function} if it satisfies:
\begin{itemize}
	\item $\theta(\pi,d)$ is non-decreasing in both $\pi$ and $d$.
	\item $\theta(\pi,d)$ is continuous in its second argument.
	\item $\theta(\pi,0)=0$ for all $\pi\in \mathbb{N}_u$.
	\item $\theta(\pi,d)>0$ whenever $\pi,d>0$.
\end{itemize}
\end{definition}

In their article, ER do not explicitly require $\theta(\pi,d)$ to be non-decreasing in $\pi$, although they impose this property for $d$. Their justification is that monotonicity in $\pi$ arises \emph{ex post} from their main theorem. However, we regard it as more natural to include this property in the definition itself\footnote{In fact, the argument used by ER to prove that $\theta(\pi,d)$ is increasing in $\pi$ is incorrect. They show that their identification function $\phi$ satisfies $\phi(2x)>\phi(x)$ for all $x$ and conclude that $\phi$ is increasing, which does not follow.}.

\begin{remark}
Although the proper domain of each polarization index is $\bigcup_{n\ge2} {(\mathbb{N}_u)_{++}}^n \times \mathcal{R}_n$, for simplicity we restrict attention here to the case $\pi_i\in \N$ (that is, $u=1$). The definitions and proofs for general $u$ are available under request.
\end{remark}

\section{The Structure of Antagonism Functions Compatible with Condition H}\label{H}

In the previous section, we introduced the model following Esteban and Ray. The only restrictions imposed thus far are Condition H and the basic properties that antagonism functions must satisfy. Remarkably, these assumptions alone---without any additional axioms---are sufficient to show that admissible antagonism functions have a very specific structure: $\theta(\pi,d)$ decomposes as the product of two functions, one depending only on $\pi$ and the other only on $d$. In other words, the effective antagonism can be expressed as the product of an \emph{identification function} $\phi(\pi)$ and an \emph{alienation function} $f(d)$. Moreover, we can establish that $\phi(\pi)=\pi^{\alpha}$. 

The outline of the proof is as follows. First, we show that $\theta(\pi,d)$ can be decomposed as $\theta(\pi,d)=\phi(\pi)f(d)$. Second, we prove that $\phi(\pi)$ is a completely multiplicative function, that is, $\phi(xy)=\phi(x)\phi(y)$. Recall that $\pi\in\N$. The structure of such functions may in general be complex,\footnote{This contrasts with the case $\pi\in\R$, where it is straightforward to verify that if $\phi(0)=0$ and $\phi$ is non-decreasing, then $\phi(\pi)=k\pi^{\alpha}$ for some $\alpha>0$.} since they must be defined on the prime numbers. However, in a classical paper by Erd\"os (1946)~\cite{Erdos1946}, it was shown that under the assumption of monotonicity these functions indeed take the desired form.

\begin{theorem}\label{H-increasing}
Let $\theta:\mathbb{N}\times \mathbb{R}_{+}\rightarrow \mathbb{R}$ be an antagonism function as defined in Definition~\ref{eff-ant}, and let $P$ be the corresponding polarization index. Then $P$ satisfies Condition H if and only if there exist a continuous and non-decreasing function $f:\mathbb{R}_{+}\rightarrow \mathbb{R}$ satisfying $f(0)=0$ and $f(d)>0$ for all $d>0$, and a fixed real number $\alpha\ge0$, such that for every $\pi\in\mathbb{N}$ and $d\in\mathbb{R}_{+}$, 
\[
\theta(\pi,d)=\pi^{\alpha}f(d).
\]
\end{theorem}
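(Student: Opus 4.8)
The plan is to prove both implications, with the converse being routine and the forward direction carrying essentially all the difficulty. For the easy direction I would substitute $\theta(\pi,d)=\pi^{\alpha}f(d)$ into~\eqref{index}. Since $\theta(\lambda\pi_i,d)=\lambda^{\alpha}\pi_i^{\alpha}f(d)$, a direct computation gives $P(\lambda\bm{\pi},\bm y)=\lambda^{2+\alpha}P(\bm{\pi},\bm y)$ for every $\lambda\in\N_{++}$. Because $\lambda^{2+\alpha}>0$, the homothecy multiplies every index value by one and the same positive constant and therefore preserves the ranking, so Condition H holds. One also checks that $\pi^{\alpha}f(d)$ is a genuine antagonism function in the sense of Definition~\ref{eff-ant}: non-decreasingness in $\pi$ follows from $\alpha\ge0$, and the remaining properties are inherited from $f$.

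For the forward direction I would follow the three steps announced before the statement. First, the decomposition. Evaluating $P$ on the two-group distribution with both groups of size $\pi$ at mutual distance $d$ gives $P=2\pi^{2}\theta(\pi,d)$; applying Condition H to two such configurations built from unit groups shows that $\theta(1,d)=\theta(1,d')$ forces $\theta(\lambda,d)=\theta(\lambda,d')$ for every $\lambda\in\N$. Hence $\theta(\lambda,\cdot)$ depends on $d$ only through $f(d):=\theta(1,d)$, so there exist non-decreasing continuous functions $H_\lambda$ with $\theta(\lambda,d)=H_\lambda(f(d))$ and $H_1=\mathrm{id}$. The crux is to upgrade each $H_\lambda$ to a linear map $H_\lambda(t)=\phi(\lambda)\,t$. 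For this I would use three collinear unit groups at $0,x,x+w$, for which $P=2\bigl[f(x)+f(w)+f(x+w)\bigr]$ and, after scaling by $\lambda$, $P=2\lambda^{2}\bigl[H_\lambda(f(x))+H_\lambda(f(w))+H_\lambda(f(x+w))\bigr]$. Condition H then forces the latter bracket to depend on $\bigl(x,w\bigr)$ only through the sum $f(x)+f(w)+f(x+w)$; letting $w\to0$ identifies that dependence and reduces the identity to a Jensen-type equation whose only continuous, monotone solutions vanishing at $0$ are linear. This yields the decomposition $\theta(\pi,d)=\phi(\pi)f(d)$ with $\phi(1)=1$ and $\phi(\pi)>0$.

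Second, complete multiplicativity. Returning to the equal two-group distributions and matching their indices using continuity of $f$ (choosing the distance small enough that the required value lies in the range of $f$), Condition H gives, after cancelling the common positive factor, that $\phi(\lambda\pi)/\phi(\pi)$ is independent of $\pi$; evaluating at $\pi=1$ and using $\phi(1)=1$ yields $\phi(\lambda\pi)=\phi(\lambda)\phi(\pi)$. Third, since $\theta$ is non-decreasing in $\pi$, the function $\phi(\pi)=\theta(\pi,d)/f(d)$ is non-decreasing, and a monotone completely multiplicative function on $\N$ must be a power by Erd\"os~\cite{Erdos1946}; hence $\phi(\pi)=\pi^{\alpha}$ with $\alpha\ge0$, the sign coming from monotonicity. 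Together with $f=\theta(1,\cdot)$, which inherits continuity, non-decreasingness, $f(0)=0$ and positivity on $(0,\infty)$ directly from $\theta$, this gives $\theta(\pi,d)=\pi^{\alpha}f(d)$.

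I expect the linearity of $H_\lambda$ to be the main obstacle. It is precisely here that one must extract a Cauchy/Jensen equation from genuinely multi-group configurations, while handling the cases where $f$ is only non-decreasing (rather than strictly increasing or differentiable) and where the values produced may fall outside the range of $f$, so that the matching arguments must be localized near $d=0$. By contrast, once linearity and multiplicativity are established, the appeal to Erd\"os's theorem to pass from a monotone multiplicative sequence on $\N$ to a power law is the step that genuinely exploits the integrality of group sizes and replaces the elementary argument available in the real-valued case.
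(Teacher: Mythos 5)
Your converse direction and the overall skeleton of the forward direction (factor $\theta$ through $f(d)=\theta(1,d)$, prove complete multiplicativity of $\phi$, invoke Erd\H{o}s to get $\phi(\pi)=\pi^{\alpha}$) coincide with the paper's, and your Steps 2 and 3 are sound conditional on Step 1. The genuine gap is exactly at the point you flag as the crux: the linearity of $H_{\lambda}$. Letting $w\to 0$ in the three-collinear-group identity does \emph{not} reduce matters to a Jensen-type equation with only linear solutions; since the achievable triples $\bigl(f(x),f(w),f(x+w)\bigr)$ degenerate to $(s,0,s)$ in that limit, all you obtain is the dilation equation $H_{\lambda}(2s)=2H_{\lambda}(s)$ on small $s$. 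That equation has many nonlinear continuous, strictly increasing solutions vanishing at $0$, for instance $H(s)=s\bigl(1+\epsilon\sin(2\pi\log_{2}s)\bigr)$ with $H(0)=0$ and $\epsilon>0$ small, so your reduction provably loses too much information. Before the limit, the identity $H_{\lambda}\bigl(f(x)+f(w)+f(x+w)\bigr)=H_{\lambda}(f(x))+H_{\lambda}(f(w))+H_{\lambda}(f(x+w))$ holds only on a two-parameter constrained family (for three collinear groups the pairwise distances always satisfy $d_{13}=d_{12}+d_{23}$, so the third argument is tied to the first two through $f$); you have not solved this constrained functional equation, and it is far from clear that it forces linearity for every admissible $f$, in particular when $f$ is merely non-decreasing (plateaus make the tie multivalued) or bounded (so sums leave the range of $f$ and the matching with a two-group distribution fails outside a neighborhood of $0$).

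What the paper does instead, and what your fixed three-group configurations cannot deliver, is the realization of \emph{arbitrary integer multiples} of a single antagonism value as polarization values: it takes $(m_k+1)$ equal groups packed at mutual distances of order $\epsilon$, so that as $\epsilon\to 0^{+}$ the polarization tends to $2m_k\pi^{2}\theta(\pi,d)$, and then runs a two-sided rational approximation ($q_k=n_k/m_k$ converging to the relevant ratio from above and from below), applying Condition H to strict inequalities and passing to the limit. In your notation this yields the comparisons $n\,H_{\lambda}(t')\ge m\,H_{\lambda}(t)$ whenever $n t'> m t$ (and symmetrically), which together with monotonicity force $H_{\lambda}(t)/t$ to be constant, i.e.\ linearity; moreover the paper proves this simultaneously for all $\pi$, not just $\pi=1$, obtaining $\theta(\lambda\pi,d)=\lambda^{\alpha(\lambda)}\theta(\pi,d)$. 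To repair your argument along your own lines, replace the three-group Jensen reduction by clustered $(m+1)$-group configurations: for each small $\epsilon$ match the cluster's exact polarization with a two-group distribution at distance $d'_{\epsilon}$ via the intermediate value theorem (localizing so the value stays in the range of $f$), scale by $\lambda$, and let $\epsilon\to 0$ using continuity of $\theta(\lambda,\cdot)$ and $\theta(\lambda,0)=0$ to get $H_{\lambda}(m\,f(d))=m\,H_{\lambda}(f(d))$ for all $m\in\N$; density of the rationals plus monotonicity then give linearity, and the rest of your proof goes through as written.
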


\begin{proof}
The first part of the proof shows that under the conditions of the theorem, $\theta(\pi,d)$ can be written as $\theta(\pi,d)=\phi(\pi)f(d)$, where $\phi(\pi)$ is completely multiplicative.

Assume that Condition H holds. We first prove that for every $\lambda\in\N_{++}$ there exists an exponent $\alpha(\lambda)$ such that, for all $\pi$ and $d$, 
\[
\theta(\lambda\pi,d)=\lambda^{\alpha(\lambda)}\theta(\pi,d).
\]
To establish this, fix $\lambda$ and consider $\pi,\pi'\in\N_{++}$ and $d,d'\in\R_{++}$. There exist $\alpha,\alpha'\in\R$ such that $\theta(\lambda\pi,d)=\lambda^{\alpha}\theta(\pi,d)$ and $\theta(\lambda\pi',d')=\lambda^{\alpha'}\theta(\pi',d')$.\footnote{Such $\alpha$ and $\alpha'$ exist because $\theta(\pi,d)>0$ whenever $\pi,d>0$, so $\alpha=\ln\!\big(\theta(\lambda\pi,d)/[\lambda\theta(\pi,d)]\big)$. When $d=0$ or $\pi=0$, there is nothing to prove since $\theta(\lambda\pi,d)=\lambda^{\alpha}\theta(\pi,d)=0$ for any $\alpha\in\R$.} We must then prove that $\alpha=\alpha'$, i.e., that $\alpha$ depends only on $\lambda$.

Let $(q_k)$ be a sequence of rational numbers converging to $\frac{\pi^2\theta(\pi,d)}{{\pi'}^2\theta(\pi',d')}$ such that $q_k>\frac{\pi^2\theta(\pi,d)}{{\pi'}^2\theta(\pi',d')}$ for all $k$. Each term can be written as $q_k=\frac{n_k}{m_k}$ with $n_k,m_k\in\N_{++}$. Since $\theta$ is nonnegative,
\[
n_k{\pi'}^2\theta(\pi',d')>m_k\pi^2\theta(\pi,d).
\]

Consider the $(m_k+1)$-tuples $\bm{\pi}=(\pi,\ldots,\pi)$ and $\bm{y}_{\epsilon}=(0,d,d+\epsilon,\ldots,d+(m_k-1)\epsilon)$, and the $(n_k+1)$-tuples $\bm{\pi}'=(\pi',\ldots,\pi')$ and $\bm{y}'_{\epsilon}=(0,d',d'+\epsilon,\ldots,d'+(n_k-1)\epsilon)$ for $\epsilon>0$ sufficiently small. Using the continuity of $\theta$ in $d$ and the fact that $\theta(\cdot,0)\equiv0$, we have
\[
\lim_{\epsilon\to0^+} P(\bm{\pi},\bm{y}_{\epsilon})=2m_k\pi^2\theta(\pi,d), \qquad 
\lim_{\epsilon\to0^+} P(\bm{\pi}',\bm{y}'_{\epsilon})=2n_k{\pi'}^2\theta(\pi',d').
\]
Given the strict inequality above, there exists $\epsilon_0>0$ such that $P(\bm{\pi},\bm{y}_{\epsilon})<P(\bm{\pi}',\bm{y}'_{\epsilon})$ for all $\epsilon<\epsilon_0$. Applying Condition H yields
\[
P(\lambda\bm{\pi},\bm{y}_{\epsilon})\le P(\lambda\bm{\pi}',\bm{y}'_{\epsilon}) \quad \text{for all } \epsilon<\epsilon_0.
\]
Taking limits gives
\[
2n_k(\lambda\pi')^2\lambda^{\alpha'}\theta(\pi',d')=\lim_{\epsilon\to0^+}P(\lambda\bm{\pi}',\bm{y}'_{\epsilon})
\ge \lim_{\epsilon\to0^+}P(\lambda\bm{\pi},\bm{y}_{\epsilon})
=2m_k(\lambda\pi)^2\lambda^{\alpha}\theta(\pi,d),
\]
which implies
\[
q_k=\frac{n_k}{m_k}\ge \lambda^{\alpha-\alpha'}\frac{\pi^2\theta(\pi,d)}{{\pi'}^2\theta(\pi',d')}.
\]
Since $(q_k)$ converges to $\frac{\pi^2\theta(\pi,d)}{{\pi'}^2\theta(\pi',d')}$, we conclude that $\lambda^{\alpha-\alpha'}\le1$. A symmetric argument with a sequence converging from below yields $\lambda^{\alpha-\alpha'}\ge1$, hence $\alpha=\alpha'$.

Thus, for each $\lambda$ there exists $\alpha(\lambda)$ such that $\theta(\lambda\pi,d)=\lambda^{\alpha(\lambda)}\theta(\pi,d)$ for all $\pi\in\N$ and $d\in\R_+$. Define $f(d)=\theta(1,d)$ and $\phi(\pi)=\pi^{\alpha(\pi)}$ for $\pi>0$ (and $\phi(0)=0$). It follows that
\[
\theta(\pi,d)=\pi^{\alpha(\pi)}\theta(1,d)=\phi(\pi)f(d).
\]
Moreover, $\phi$ is completely multiplicative, since
\[
\phi(\pi\pi')=(\pi\pi')^{\alpha(\pi\pi')}
=\frac{\theta(\pi\pi',1)}{\theta(1,1)}
=\pi^{\alpha(\pi)}\frac{\theta(\pi',1)}{\theta(1,1)}
=\phi(\pi)\phi(\pi').
\]

For the converse, suppose $\theta(\pi,d)=\phi(\pi)f(d)$ satisfies the hypothesis of the theorem. Let $\bm{\pi}=(\pi_1,\ldots,\pi_n)$, $\bm{y}=(y_1,\ldots,y_n)$, $\bm{\pi}'=(\pi'_1,\ldots,\pi'_m)$, and $\bm{y}'=(y'_1,\ldots,y'_m)$. If $P(\bm{\pi},\bm{y})\ge P(\bm{\pi}',\bm{y}')$, then
\[
\sum_{i,j}\pi_i\pi_j\theta(\pi_i,|y_i-y_j|)\ge \sum_{i,j}\pi'_i\pi'_j\theta(\pi'_i,|y'_i-y'_j|).
\]
For any $\lambda>0$,
\begin{equation*}
    P(\lambda\bm{\pi},\bm{y})=\sum_{i,j}\lambda^2\pi_i\pi_j \theta(\lambda \pi_i,|y_i-y_j|) = \lambda^2 \phi(\lambda) \sum_{i,j}\pi_i\pi_j \phi(\pi_i)f(|y_i-y_j|)\ge
\end{equation*}
\begin{equation*}
    \ge\lambda^2 \phi(\lambda) \sum_{i,j}\pi_i\pi_j \phi(\pi_i)f(|y_i-y_j|) = \sum_{i,j}\lambda^2\pi'_i\pi'_j \theta(\lambda \pi'_i,|y'_i-y'_j|) =  P(\lambda\bm{\pi'},\bm{y'})
\end{equation*}
so $P$ satisfies Condition H.

Hence, an effective antagonism function $\theta$ induces a polarization index satisfying Condition H if and only if $\theta(\pi,d)=\phi(\pi)f(d)$, where $\phi$ is completely multiplicative, that is, $\phi(xy)=\phi(x)\phi(y)$.

Finally, since $\theta$ is non-decreasing in $\pi$, $\phi$ must also be non-decreasing. Completely multiplicative, non-decreasing functions $\phi:\N\to\R$ have been characterized in \cite{Erdos1946,Nowacki1979}: they are precisely of the form $\phi(\pi)=\pi^{\alpha}$ for some $\alpha\ge0$, as desired.
\end{proof}

\begin{remark}
The proof extends to distributions defined on $\mathcal D=\bigcup_{n\ge2}\mathbb{R}_{++}^n\times\mathcal{R}_n$. Hence, our result also holds in the framework used by Esteban and Ray.
\end{remark}

\section{The Effect of Axioms 1--3 on the Allowed Antagonism Functions}\label{Axioms}

Esteban and Ray equipped their model with three additional axioms that describe how the polarization index behaves under specific transformations of certain distributions. Essentially, these axioms state that if one modifies a three-group distribution in such a way that it becomes ``closer'' to a two-bar bimodal distribution, then polarization must increase. 

In this section, we revisit these conditions, providing a mathematically precise formulation within a framework where individuals are not infinitely divisible. We then establish the necessary and sufficient conditions for each axiom, under the assumption that the effective antagonism must be of the form $\theta(\pi,d)=\pi^{\alpha}f(d)$ in order to satisfy Condition H.

\subsection{On Axiom 1}

Axiom 1 concerns the effect of merging two small groups into a single one when we have a three-group distribution.

\begin{figure}[h!]
	\centering
\begin{tikzpicture}[scale=1]
	
	\newcommand\barra[4]{
		\draw (#1,0) -- (#1,#2);
		\node[below right] at (#1,#2) {$#4$};
		\node[below] at (#1,0) {$#3$};
	}
	
	\newcommand\barradisc[4]{
		\draw[dashed] (#1,0) -- (#1,#2);
		\node[below right] at (#1,#2) {$#4$};
		\node[below] at (#1,0) {$#3$};
	}
	
	\begin{scope}[shift={(-3.5,0)}, scale=1, every node/.style={scale=1}]
		
		\draw[->] (-0.5,0) -- (5,0) node[right] {};
		
		\barra{2.5}{1}{\scriptsize a}{q};
		\barra{3.5}{1}{b}{q};
		\barra{0}{3.5}{0}{p};
		
		\draw[{(-)}] (1,0) node[label=below:{\scriptsize$x - \varepsilon$}] {} -- (4.5,0) node[label=below:{\scriptsize$x + \varepsilon$}] {};
		
	\end{scope}
	
	\begin{scope}[shift={(3.5,0)}, scale=1, every node/.style={scale=1}]
		
		\draw[->] (-0.5,0) -- (5,0) node[right] {};
		
		\barradisc{2.5}{1}{}{};
		\barradisc{3.5}{1}{}{};
		
		\draw[->, blue!60, thick, dotted] (2.6,1) .. controls (2.65,1.4) and (2.8,1.4) .. (2.9,1.45);
		\draw[->, blue!60, thick, dotted] (3.4,1) .. controls (3.35,1.4) and (3.2,1.4) .. (3.1,1.45);
		
		\barra{3}{2}{\frac{a+b}{2}}{2q};
		\barra{0}{3.5}{0}{p};
		
		\draw[{(-)}] (1,0) node[label=below:{\scriptsize$x - \varepsilon$}] {} -- (4.5,0) node[label=below:{\scriptsize$x + \varepsilon$}] {};
		
	\end{scope}
	
\end{tikzpicture}
\caption{Graphical representation of Axiom 1. The distribution on the right is more polarized.}
\label{figure_Axiom1}
\end{figure}

The axiom can be summarized as follows. Suppose we have the three-group configuration on the left of Figure~\ref{figure_Axiom1}, consisting of a large group and two smaller groups of equal size. Suppose further that these two smaller groups are \emph{close enough} relative to their distance from the larger one. Then, if we merge the two smaller groups into a single new group located at the midpoint of the originals, polarization should increase. This will hold, at least, when the two smaller groups are \emph{small enough} relative to the larger one, ensuring that the new group is not excessively large—otherwise the resulting distribution may no longer resemble a two-bar bimodal shape.

Note that the original formalization of Axiom 1 by Esteban and Ray required a correction introduced by Kawada \emph{et al.}~\cite{KAWADA201835} to make their main theorem valid. We recall their revised version here.

\begin{axiom*}[Axiom 1 of Esteban and Ray (from \cite{KAWADA201835})]\label{axiom1ER}
For any $p > 0$ and $x > 0$, there exist $\epsilon > 0$ and $\mu > 0$ such that for any $a,b \in B(x,\epsilon)$ and any $q < p$ with $0 < q < \mu p$, one has
\[
P ((p, q, q), (0, a, b)) < P((p, 2q), (0, \frac{a + b}{2})).
\]
\end{axiom*}

While this captures the intuition behind Axiom 1, it only makes sense if individuals can be treated as infinitely divisible. The role of $\mu$ in this definition is precisely to ensure that for all values of $q$ not smaller but small enough with respect to $p$ (that is, $q < \mu p$), one has the desired effect in polarization. However for distributions $(\bm{\pi},\bm{y})\in \bigcup_{n\ge2} {\mathbb{N}_{++}}^n \times \mathcal{R}_n$, where the number of individuals in a group is bounded below, taking $\mu<1/p$ the axiom is empty and then it is satisfied trivially because there is no $q<\mu p$ contradicting the axiom. To preserve the spirit of Axiom 1 while making it meaningful, we reformulate it as follows.

\begin{axiom}[Axiom 1]\label{Axiom1-CR}
Let $p > 1$ and $x > 0$. There exist $\epsilon > 0$ and $q_0$ with $0 < q_0 < p$ such that if $q \leq q_0$ and $a,b \in B(x,\epsilon)$, then
\[
P ((p, q, q), (0, a, b)) < P((p, 2q), (0, \frac{a + b}{2})).
\]
\end{axiom}

The Gini index, corresponding to $\theta(\pi,d) = d$ (i.e., $\alpha = 0$ and $f(d) = d$), satisfies neither the original Axiom 1 of Esteban and Ray nor Axiom~\ref{Axiom1-CR}. We will now prove that Axiom 1 holds if and only if the identification factor in the effective antagonism is not neutral, in other words, if $\alpha \neq 0$ for any given alienation function $f$.

\begin{proposition}\label{axiom1-characterization}
Let $P$ be a polarization index induced by an antagonism function satisfying Condition H, that is, $\theta(\pi, d) = \pi^\alpha f(d)$ with $\alpha \geq 0$, $f(0) = 0$, and $f$ non-decreasing. Then $P$ satisfies Axiom 1 if and only if $\alpha > 0$.
\end{proposition}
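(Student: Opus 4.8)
The plan is to evaluate both distributions in Axiom~\ref{Axiom1-CR} explicitly through the factorization $\theta(\pi,d)=\pi^{\alpha}f(d)$ provided by Condition~H, and then to control the sign of the resulting difference as the radius $\epsilon$ of $B(x,\epsilon)$ and the small mass $q$ vary. Writing $c=\tfrac{a+b}{2}$ and using~\eqref{index}, one computes
\[
P\big((p,q,q),(0,a,b)\big)=pq\,(p^{\alpha}+q^{\alpha})\big[f(a)+f(b)\big]+2\,q^{2+\alpha}f(|a-b|),
\]
\[
P\big((p,2q),(0,c)\big)=2pq\,(p^{\alpha}+2^{\alpha}q^{\alpha})\,f(c).
\]
Setting $D:=P((p,2q),(0,c))-P((p,q,q),(0,a,b))$ and grouping by powers of $q$,
\[
D=p^{1+\alpha}q\big[2f(c)-f(a)-f(b)\big]+pq^{1+\alpha}\big[2^{1+\alpha}f(c)-f(a)-f(b)\big]-2q^{2+\alpha}f(|a-b|).
\]
Axiom~\ref{Axiom1-CR} is exactly the statement that $\epsilon$ and $q_0$ can be chosen so that $D>0$ for all admissible $q\le q_0$ and all $a,b\in B(x,\epsilon)$.

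For the implication $\alpha>0\Rightarrow$ Axiom~1 I would argue that the middle term is the one that carries the identification effect. As $a,b\to x$ its bracket tends to $(2^{1+\alpha}-2)f(x)=2(2^{\alpha}-1)f(x)$, which is strictly positive precisely when $\alpha>0$ (recall $f(x)>0$ for $x>0$). Fixing $p>1$ and $x>0$, I would choose $\epsilon$ so small that, by continuity of $f$, this bracket exceeds a fixed positive constant while both $|2f(c)-f(a)-f(b)|\le 4M(\epsilon)$ and $f(|a-b|)\le f(2\epsilon)$ are tiny, where $M(\epsilon):=\sup_{t\in B(x,\epsilon)}|f(t)-f(x)|\to 0$. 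This yields a lower bound of the form
\[
D\ \ge\ pq\Big(q^{\alpha}(2^{\alpha}-1)f(x)-4p^{\alpha}M(\epsilon)-\tfrac{2}{p}\,q^{1+\alpha}f(2\epsilon)\Big),
\]
and it remains to check that the bracket is positive for every admissible $q$.

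The decisive point---and the conceptual heart of the paper---is that this last verification \emph{relies on discreteness}. Since individuals are not infinitely divisible, $q$ is bounded below by $1$ (in general by $u$) and, within the axiom, above by $q_0<p$. The lower bound gives $q^{\alpha}\ge1$, which is what prevents the order-$q$ term $p^{1+\alpha}q[2f(c)-f(a)-f(b)]$---whose sign cannot be controlled for a general non-decreasing $f$---from swamping the order-$q^{1+\alpha}$ identification term; the upper bound $q<p$ keeps $q^{1+\alpha}f(2\epsilon)$ under control. One then shrinks $\epsilon$ until the positive constant dominates uniformly over $q\in[1,q_0]$. I expect this \emph{uniform} domination to be the main obstacle: in the infinitely divisible regime of Esteban and Ray, where $q\to0^{+}$ is admissible, the order-$q$ term dominates and the argument collapses, which is exactly why relaxing divisibility enlarges the admissible family.

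For the converse I would prove the contrapositive. Setting $\alpha=0$ reduces the difference to
\[
D=2pq\big[2f(c)-f(a)-f(b)\big]-2q^{2}f(|a-b|).
\]
If $f$ is not midpoint-concave on $B(x,\epsilon)$ one can pick $a,b$ with $2f(c)-f(a)-f(b)\le0$, so that $D<0$ and the axiom fails; this already covers every convex $f$ (in particular the linear, Gini-type $f(d)=d$ noted before the statement, for which $2f(c)-f(a)-f(b)=0$ and $D=-2q^{2}f(|a-b|)<0$). When $f$ is concave near $x$ I would instead take $a,b$ close together and exploit that the alienation-loss term $-2q^{2}f(|a-b|)$ outweighs the second-order concavity gap, forcing $D\le0$; this is where a mild regularity caveat on the behaviour of $f$ near $0$ enters, and isolating the right hypothesis here is the remaining delicate part of the converse.
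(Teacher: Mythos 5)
Your expansion of $D$ and your argument that $\alpha>0$ implies Axiom~1 are correct and essentially the paper's: the paper fixes each integer $q<p$, extracts a radius $\epsilon_q$ from the limit $\lim_{a,b\to x}(P_f-P_i)=2pq^{\alpha+1}(2^\alpha-1)f(x)>0$, and takes the minimum over the finitely many admissible $q$, whereas you get the uniformity over $q\in[1,q_0]$ in a single estimate via $q^\alpha\ge 1$ and $q<p$. Both routes exploit discreteness exactly where you say they must (and your coefficient $2q^{2+\alpha}f(|a-b|)$ is the correct one; the paper's display drops the factor $2$, harmlessly, since that term vanishes in the limit).

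The converse, however, has a genuine gap precisely where you flag it: the concave branch. No ``mild regularity caveat'' can be added, since the proposition admits every continuous non-decreasing $f$ with $f(0)=0$, and your premise that the concavity gap $G=2f(c)-f(a)-f(b)$ is second order in $h=b-a$ is simply false in this generality: for a concave $f$ with a kink in $B(x,\epsilon)$ (one-sided slopes $s_->s_+$), a pair $a,b$ straddling the kink gives $G$ of order $h$, comparable to the loss term $2q^2f(h)$, so no domination-by-orders argument can decide the sign; everything hinges on exact constants. The paper closes this with two moves you are missing. First, assuming the axiom and dropping the nonnegative term $2q^2f(h)$ yields \emph{strict} local midpoint concavity of $f$ at every $x>0$, whence $f$ is concave on all of $(0,\infty)$ by continuity (the Kawada et al.\ step); this also repairs your dichotomy, which is not exhaustive as stated --- to defeat the axiom at $x$ through your first branch you would need midpoint concavity to fail on $B(x,\epsilon)$ for \emph{every} $\epsilon$, not for some. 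Second, and decisively, the exact secant-slope inequality for concave $f$ with $f(0)=0$: for $a<b$ in $B(x,\epsilon)$ with $h\le a$ (guaranteed once $\epsilon\le x/3$), the slope over $[0,h]$ dominates the slope over $[a,c]$, so $f(c)-f(a)\le\tfrac12 f(h)$, while $f(c)\le f(b)$ by monotonicity; hence $G\le\tfrac12 f(h)$ and $D\le 2pq\cdot\tfrac12 f(h)-2q^2f(h)=qf(h)\,(p-2q)$. This is $\le 0$ only when $p\le 2q$, which is why the paper tests the axiom at $p=2$, where $q=1$ is the \emph{only} admissible integer value: then $D\le 0$, contradicting the strict inequality required by the axiom. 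The choice of test configuration is therefore essential, not cosmetic --- with $p$ large relative to $q$ the bound $qf(h)(p-2q)$ is inconclusive --- and the argument is exact and first-order, needing nothing about $f$ near $0$ beyond $f(0)=0$ and monotonicity.
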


\begin{proof}
Assume first that Axiom 1 holds and that $\theta(\pi, d) = \pi^\alpha f(d)$ with $f$ non-decreasing and $\alpha \geq 0$. Suppose $\alpha = 0$. Consider any $x > 0$, $p = 2$, and $q = 1$ (the smallest $q$ satisfying $q < p$). By Axiom 1, there exists $\epsilon > 0$ such that for every $a,b \in B(x,\epsilon)$ with $a < b$, the initial and final polarization levels $P_i$ and $P_f$ satisfy
\[
P_i = P((2,1,1),(0,a,b)) < P((2,2),(0, \frac{a+b}{2})) = P_f.
\]
Now,
\[
P_i = 4(f(a) + f(b)) + 2f(|b-a|), \qquad
P_f = 8f\left(\frac{a+b}{2}\right).
\]
Hence,
\begin{equation}\label{difference-axiom1}
0 < \frac{P_f - P_i}{4} = 2\left(f\left(\frac{a+b}{2}\right) - \frac{f(a) + f(b)}{2}\right) - \frac{f(|b-a|)}{2}.
\end{equation}
Since $f(d) \ge 0$ for all $d$, it follows that for every $x > 0$, there exists $\epsilon > 0$ such that for all $a,b \in B(x,\epsilon)$ with $a < b$,
\[
0 < f\left(\frac{a+b}{2}\right) - \frac{f(a) + f(b)}{2}.
\]
That is, $f$ is locally strictly midpoint concave. Since every continuous midpoint-concave function is also concave (see, e.g., \cite[Appendix~C]{KAWADA201835}), $f$ is concave on $(0,\infty)$ and, by continuity, also on $[0,\infty)$.

Returning to Inequation~\eqref{difference-axiom1} and rearranging terms becomes
\begin{equation}\label{ineqA1}
0 < \left(f\left(\frac{a+b}{2}\right) - f(b)\right) + \left(f\left(\frac{a+b}{2}\right) - f(a)\right) - \frac{f(|b-a|)}{2}.
\end{equation}
Using a well known result on the decrease of the slope of the secants of concave functions and that $f$ is  continuous (see  for instance \cite[Theorem 7.5]{Sundaram_1996}) one has:
\[
\frac{f(b-a)}{b-a} \ge \frac{f(a) - f(b-a)}{b} \ge \frac{f\left(\frac{a+b}{2}\right) - f(a)}{\frac{b-a}{2}}.
\]
Then,
\[
f\left(\frac{a+b}{2}\right) - f(a) - \frac{1}{2}f(|b-a|) \le 0,
\]
and since $f$ is non-decreasing,
\[
0 < \left(f\left(\frac{a+b}{2}\right) - f(b)\right) + \left(f\left(\frac{a+b}{2}\right) - f(a)\right) - \frac{f(|b-a|)}{2} \le 0,
\]
a contradiction.\\

For the converse, assume $\theta(\pi, d) = \pi^\alpha f(d)$ with $\alpha > 0$ and $f$ non-decreasing. For any $x > 0$, let $P_f$ and $P_i$ denote, respectively, the polarization levels after and before the merger described in Axiom 1, for arbitrary $q$, $a$, and $b$ such that $q < p$. Then
\[
P_f = P\left((p,2q), \left(0,\frac{a+b}{2}\right)\right) = 2pq(p^\alpha + (2q)^\alpha)f\left(\frac{a+b}{2}\right),
\]
and
\[
P_i = P((p,q,q),(0,a,b)) = pq(p^\alpha + q^\alpha)(f(a)+f(b)) + q^{\alpha+2}f(|b-a|).
\]
Since $f$ is continuous, $\alpha > 0$, and $f(x) > 0$,
\[
\lim_{a\to x}\lim_{b\to x} (P_f - P_i) = 2pq^{\alpha+1}(2^\alpha - 1)f(x) > 0.
\]
Hence, there exists a neighborhood $B(x,\epsilon_q)$ of $x$, depending on $q$, such that if $a,b \in B(x,\epsilon_q)$, then $P_f > P_i$. As there are finitely many $q < p$ with $q \in \mathbb{N}$, we may set $\epsilon := \min_{q < p} \epsilon_q > 0$, concluding that for all $q < p$ and all $a,b \in B(x,\epsilon)$, $P_f > P_i$.
\end{proof}

\subsection{On Axiom 2}

The second axiom introduced by Esteban and Ray concerns the effect of changes that bring groups in a distribution closer together.

\begin{figure}[h!]
	\centering
\begin{tikzpicture}[scale=0.8]
	
	\newcommand\barra[4]{
		\draw (#1,0) -- (#1,#2);
		\node[below right] at (#1,#2) {$#4$};
		\node[below] at (#1,0) {$#3$};
	}
	
	\newcommand\barradisc[4]{
		\draw[dashed] (#1,0) -- (#1,#2);
		\node[below right] at (#1,#2) {$#4$};
		\node[below] at (#1,0) {$#3$};
	}
	
	\begin{scope}[shift={(-3.5,0)}
		, scale=1, every node/.style={scale=1}
		]
		
		\draw[->] (-0.5,0) -- (5,0) node[right] {};

		\barra{2.75}{1}{x}{q};
		\barra{4.5}{2.5}{y}{r};
		\barra{0}{3}{0}{p}
		\draw[{|-|}] (2.25,0) node[label=below:{\scriptsize$\displaystyle \frac{y}{2}$}] {} -- (4.5,0) node[label=below:{}] {};
		
	\end{scope}
	
	\begin{scope}[shift={(3.5,0)}
		, scale=1, every node/.style={scale=1}
		]
		
		\draw[->] (-0.5,0) -- (5,0) node[right] {};
		
		\barradisc{2.5}{1}{}{};
		\barra{3}{1}{x+\Delta}{q};
		\barra{4.5}{2.5}{y}{r};
		\barra{0}{3}{0}{p}
		
		\draw[->, blue!60, thick, dotted] (2.55,0.5) -- (2.95,0.5);
		
	\end{scope}

\end{tikzpicture}

\caption{Graphical representation of Axiom 2. The distribution on the right is more polarized.}
\label{figure_Axiom2}
\end{figure}

Roughly speaking, Axiom 2 can be described as follows: Consider the three-group configuration on the left of Figure~\ref{figure_Axiom2}. Assume that the two extreme groups are unequal and that the central group is closer to the smaller extreme group. Then, as the central group moves toward the midpoint between the two extreme groups, the overall polarization decreases.

\begin{axiom}[Axiom 2 (from \cite{KAWADA201835})]
	For any $p, q, r > 0$ with $p > r$, any $x, y > 0$ such that $|y-x| < x < y$, and any $\Delta \in (0, y - x)$,
	\[
	P((p, q, r), (0, x, y)) < P((p, q, r), (0, x + \Delta, y)).
	\]
\end{axiom}

Next, we show that for polarization indices satisfying Condition H, Axiom 2 is equivalent to requiring that the alienation function $f$ be convex.

\begin{proposition}\label{axiom2-characteriation}
Let $P$ be a polarization index induced by an antagonism function satisfying Condition H, that is, $\theta(\pi, d)=\pi^\alpha f(d)$ with $\alpha \geq 0$, $f(0)=0$, and $f$ non-decreasing. Then $P$ satisfies Axiom 2 if and only if $f$ is convex.
\end{proposition}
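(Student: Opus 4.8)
The plan is to expand both polarization values explicitly via $\theta(\pi,d)=\pi^{\alpha}f(d)$ and compare them. Writing out~\eqref{index} for the three-group distribution and using $f(0)=0$ to discard the diagonal terms gives
\[
P((p,q,r),(0,x,y)) = pq(p^{\alpha}+q^{\alpha})f(x) + pr(p^{\alpha}+r^{\alpha})f(y) + qr(q^{\alpha}+r^{\alpha})f(y-x),
\]
and the analogous expression for the final distribution is obtained by the replacements $f(x)\to f(x+\Delta)$ and $f(y-x)\to f(y-x-\Delta)$, while the $f(y)$ term is unchanged. Subtracting, the $f(y)$ terms cancel, so Axiom 2 reduces to the single inequality
\begin{equation*}
A\big[f(x+\Delta)-f(x)\big] > B\big[f(y-x)-f(y-x-\Delta)\big], \tag{$\dagger$}
\end{equation*}
where $A:=pq(p^{\alpha}+q^{\alpha})$ and $B:=qr(q^{\alpha}+r^{\alpha})$. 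The two brackets are increments of $f$ over intervals of equal length $\Delta$: a right interval $[x,x+\Delta]$ and a left interval $[y-x-\Delta,y-x]$, and the constraint $|y-x|<x$ (i.e.\ $y-x<x$) places the left interval strictly to the left of the right one.

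For the direction ``$f$ convex $\Rightarrow$ Axiom 2'', I would first record two facts. Since $p>r$ and $\alpha\ge 0$, one has $A>B>0$. Since $f$ is convex with $f(0)=0$ and $f(d)>0$ for $d>0$ (the latter forced by $\theta(1,d)=f(d)>0$), $f$ is strictly increasing on $(0,\infty)$: a flat spot $[a,b]$ with $a>0$ would, via non-decreasing secant slopes, force $f\equiv 0$ on $[0,a]$, contradicting $f(a)>0$. Convexity then orders the two equal-length increments, since the left interval lies to the left of the right one: $f(x+\Delta)-f(x)\ge f(y-x)-f(y-x-\Delta)\ge 0$. Writing $L:=f(x+\Delta)-f(x)>0$ and $R:=f(y-x)-f(y-x-\Delta)$ with $0\le R\le L$, we get $AL-BR\ge AL-BL=(A-B)L>0$, which is exactly $(\dagger)$; hence the strict inequality required by Axiom 2 holds for every admissible choice of parameters.

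For the converse ``Axiom 2 $\Rightarrow$ $f$ convex'', the key device is to let $r\to p^{-}$. Because $f$ is fixed and continuous and $B\to A$ as $r\to p$, the strict inequality $(\dagger)$, valid for every $r<p$, passes to the limit as the nonstrict inequality
\[
f(x+\Delta)-f(x) \ge f(y-x)-f(y-x-\Delta),
\]
holding for all $x,y,\Delta$ with $0<\Delta<y-x<x$. Reparametrizing with $t:=(y-x)-\Delta$ and $t':=x$, this says precisely that the increment $f(\cdot+\Delta)-f(\cdot)$ is at least as large at $t'$ as at $t$ whenever $t>0$ and $t'>t+\Delta$; as the parameters range, $t$ covers all of $(0,\infty)$ and $t'$ all of $(t+\Delta,\infty)$. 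Letting $t'\to(t+\Delta)^{+}$ and using continuity of $f$ yields
\[
f(t+\Delta)-f(t) \le f(t+2\Delta)-f(t+\Delta) \qquad \text{for all } t,\Delta>0,
\]
which is midpoint convexity of $f$ on $(0,\infty)$. Since $f$ is continuous, midpoint convexity upgrades to convexity on $(0,\infty)$, and then on $[0,\infty)$ by continuity.

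I expect the main obstacle to be the converse, specifically the two successive limiting steps: first $r\to p^{-}$ to equalize the coefficients $A$ and $B$ (trading the strict inequality of Axiom 2 for a nonstrict one), and then $t'\to(t+\Delta)^{+}$ to close the gap between the two strictly separated intervals and recover adjacency, which is what turns the ``increasing increments'' statement into genuine midpoint convexity. Care is also needed to track the constraints $0<\Delta<y-x<x$ through the reparametrization, confirming that $t$ indeed ranges over all of $(0,\infty)$ and $t'$ over all of $(t+\Delta,\infty)$, so that no instance of the midpoint-convexity inequality is missed.
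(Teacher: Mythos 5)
Your argument follows the same strategy as the paper's proof: expand $P$ via $\theta(\pi,d)=\pi^{\alpha}f(d)$, reduce Axiom 2 to a weighted comparison of two equal-length increments of $f$ with weights $A>B$, and, for the converse, drive the weights to equality and the two intervals to adjacency to extract midpoint convexity, which continuity upgrades to convexity. Your forward direction is fully correct, and in one respect more careful than the paper's: the strict inequality needs $f(x+\Delta)>f(x)$, which you justify by showing that convexity together with $f(0)=0$ and $f(d)>0$ for $d>0$ forces $f$ to be strictly increasing on $(0,\infty)$; the paper's step ``because $u>v$'' uses this strictness implicitly without proof.

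The one step that needs repair is the limiting device ``$r\to p^{-}$'' in the converse. In this paper's framework populations are natural numbers, so for a \emph{fixed} $p$ the admissible $r<p$ form a finite set and the ratio $B/A$ stays bounded away from $1$: for instance, with $p=2$, $r=1$ one has
\[
\frac{B}{A}=\frac{q^{\alpha}+1}{2(2^{\alpha}+q^{\alpha})}\le\frac{1}{2},
\]
so a continuous approach $r\to p^{-}$ is simply unavailable, and the nonstrict limit inequality does not follow as written. The fix is mechanical and is exactly the paper's device: let the populations grow along integer sequences, e.g.\ $p_n=n+2$, $q_n=n$, $r_n=n+1$, so that $B_n/A_n\to1$ and the weights equalize in the limit. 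With that replacement, your two-stage argument---first obtain $f(t'+\Delta)-f(t')\ge f(t+\Delta)-f(t)$ for all $t>0$, $t'>t+\Delta$, then let $t'\to(t+\Delta)^{+}$ by continuity to get $f(t+2\Delta)-f(t+\Delta)\ge f(t+\Delta)-f(t)$, i.e.\ midpoint convexity on $(0,\infty)$---goes through, and your bookkeeping of the constraints $0<\Delta<y-x<x$ and of the ranges of $t,t'$ is correct. (The paper performs both limits simultaneously by also moving the third group, $y_n\to 2x$, but your sequential version is equivalent.)
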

\begin{proof}
Assume first that $\theta(\pi,d)=\pi^\alpha f(d)$ with $f$ convex. The polarization associated with Axiom 2 is given by
\begin{equation*}
	\mathcal{P}(\Delta)
	= pq(p^{\alpha}+q^{\alpha}) f(x+\Delta)
	+ pr(p^{\alpha}+r^{\alpha})f(y)
	+ qr(q^{\alpha}+r^{\alpha})f(y-x-\Delta),
\end{equation*}
with $0 \le \Delta < y-x$. In particular, $\mathcal{P}(0)$ represents the polarization of the initial distribution. To verify Axiom 2, we must show that $\mathcal{P}(\Delta)-\mathcal{P}(0) > 0$.
\begin{equation*}
	\mathcal{P}(\Delta)-\mathcal{P}(0)
	= pq(p^{\alpha}+q^{\alpha})(f(x+\Delta)-f(x))
	+ rq(q^{\alpha}+r^{\alpha})(f(y-x-\Delta)-f(y-x)).
\end{equation*}
Let $u=pq^{-1}$ and $v=rq^{-1}$, and define $z_0=y-x-\Delta$ and $z_1=x+\Delta$. Observe that $z_0=y-x-\Delta < y-x < x < x+\Delta= z_1$. Hence, there exist $t,\bar{t}\in(0,1)$ such that $x=tz_0+(1-t)z_1$ and $y-x=\bar{t}z_0+(1-\bar{t})z_1$. Moreover, since the distances between these points satisfy $d(z_0, y-x)=d(z_1,x)$, it follows that $t+\bar{t}=1$. Then
\begin{equation*}
	\frac{\mathcal{P}(\Delta)-\mathcal{P}(0)}{q^{\alpha+2}}
	= u(u^{\alpha}+1)(f(z_1)-f(x))
	+ v(v^{\alpha}+1)(f(z_0)-f(y-x)).
\end{equation*}

Because $u>v$, we obtain
\begin{equation*}
	\frac{\mathcal{P}(\Delta)-\mathcal{P}(0)}{q^{\alpha+2}}
	> v(v^{\alpha}+1)\big[(f(z_1)-f(x))+(f(z_0)-f(y-x))\big].
\end{equation*}

Applying the convexity of $f$, that is, $f(x)\le tf(z_0)+(1-t)f(z_1)$ and $f(y-x)\le \bar{t}f(z_0)+(1-\bar{t})f(z_1)$, yields
\begin{equation*}
		\frac{\mathcal{P}(\Delta)-\mathcal{P}(0)}{q^{\alpha+2}}>v(v^\alpha+1)\left[\left( f(z_1)-tf(z_0)-(1-t)f(z_1)\right)+\left(f(z_0)-\bar{t}f(z_0)-(1-\bar{t})f(z_1)\right)\right]=
	\end{equation*}
	\begin{equation*}
		=v(v^\alpha+1)\left(f(z_1)-f(z_0)\right)(t+\bar{t}-1)=0
	\end{equation*}
since $t+\bar{t}=1$. Hence $\mathcal{P}(\Delta)-\mathcal{P}(0)>0$.\\

For the converse, assume that Axiom 2 holds. We must show that $f$ is convex. Let $z_0,z_1\in \mathbb{R}$ with $z_0<z_1$. First, we will prove that
\[
f\!\left(\frac{z_0+z_1}{2}\right)\le\frac{f(z_0)+f(z_1)}{2}.
\]
Define $x=\tfrac{z_0+z_1}{2}$, $y_n=2x-\tfrac{z_0}{n+1}$, and $\Delta=\tfrac{z_1-z_0}{2}$. Let $(p_n,q_n,r_n)$ be given by $p_n=n+2$, $q_n=n$, and $r_n=n+1$. Then $u_n=p_n{q_n}^{-1}$ and $v_n=r_n{q_n}^{-1}$ both converge to $1$.

For every $n>0$, the vectors $(p_n,q_n,r_n)$ and $(0,x,y_n)$ together with $\Delta$ satisfy the assumptions of Axiom 2. Let
\[
\mathcal{P}_{n,\Delta}=P((p_n,q_n,r_n),(0,x+\Delta,y_n)),\qquad
\mathcal{P}_{n,0}=P((p_n,q_n,r_n),(0,x,y_n)).
\]
Then
\begin{equation*}
	0<
	\frac{\mathcal{P}_{n,\Delta}-\mathcal{P}_{n,0}}{q_n^{\alpha+2}}
	= u_n(u_n^{\alpha}+1)(f(x+\Delta)-f(x))
	+ v_n(v_n^{\alpha}+1)(f(y_n-x-\Delta)-f(y_n-x)).
\end{equation*}

Taking the limit as $n\to\infty$ and using the continuity of $f$, we obtain
\[
0 \le 2\left(f(z_1)-f\!\left(\frac{z_0+z_1}{2}\right) + f(z_0)-f\!\left(\frac{z_0+z_1}{2}\right)\right),
\]
which implies
\[
f\!\left(\frac{z_0+z_1}{2}\right)\le\frac{f(z_0)+f(z_1)}{2}.
\]
Therefore, $f$ is midpoint convex on $(0,\infty)$, and since $f$ is continuous, it is convex on $\mathbb{R}_+$.
\end{proof}

\subsection{On Axiom 3}
The last axiom proposed by ER concerns the effect of transferring individuals from a central group to groups located at the extremes.

\begin{figure}[h]
	\centering
\begin{tikzpicture}[scale=1.2]

	\newcommand\barra[4]{
		\draw (#1,0) -- (#1,#2);
		\node[below right] at (#1,#2) {$#4$};
		\node[below] at (#1,0) {$#3$};
	}
	
	\newcommand\barradisc[4]{
		\draw[dashed] (#1,0) -- (#1,#2);
		\node[below right] at (#1,#2) {$#4$};
		\node[below] at (#1,0) {$#3$};
	}
	
	\begin{scope}[shift={(-3.5,0)}
		, scale=1, every node/.style={scale=1}
		]
		
		\draw[->] (-0.5,0) -- (5,0) node[right] {};

		\barra{4.5}{0.5}{y}{p};
		\barra{2.25}{3}{\frac{y}{2}}{q};
		\barra{0}{0.5}{0}{p};
		
	\end{scope}
	
	\begin{scope}[shift={(3.5,0)}
		, scale=1, every node/.style={scale=1}
		]
		
		\draw[->] (-0.5,0) -- (5,0) node[right] {};

		\barra{4.5}{1}{y}{p+\Delta};
		\barra{0}{1}{0}{p+\Delta};
		\barra{2.25}{2}{\frac{y}{2}}{q-2\Delta};
		\draw[dashed] (2.25,2) -- (2.25,3);
		
		\draw[<-, blue!60, thick, dotted] (0.3,1.1) .. controls (1,2.4) and (1.7,2.3) .. (2.1,2.3);
		\draw[<-, blue!60, thick, dotted] (4.2,1.1) .. controls (3.5,2.4) and (2.8,2.3) .. (2.4,2.3);

	\end{scope}

\end{tikzpicture}
	\caption{Graphical representation of Axiom 3}	
\label{figureAxiom3}
\end{figure}

Axiom 3 states that in a three–group distribution where two equal groups are located at the extremes and a third group lies at the midpoint, a transfer of individuals from the central group---half of them to each of the extreme groups---increases polarization. More precisely:

\begin{axiom}[Axiom 3]
	For any $p, q > 0$, and any $x, y > 0$ with $x = y - x$, and any
	$\Delta \in (0, q/2)\cap \mathbb{N}$,
	\[
	P((p, q, p), (0, x, y)) < P((p + \Delta, q - 2\Delta, p + \Delta), (0, x, y)).
	\]
\end{axiom}

As in ER, Axiom 3 imposes bounds on the admissible values of $\alpha$ in the identification factor, depending on the alienation function $f(d)$. Let
\[
g(p,q,\alpha)
= \frac{p^{\alpha+1} q + p q^{\alpha+1} - (p+1)^{\alpha+1} (q-2) - (p+1) (q-2)^{\alpha+1}}
{(p+1)^{\alpha+2} - p^{\alpha+2}}.
\label{A3-alpha}
\]

\begin{proposition}\label{Axiom3-Characterization}
Let $P$ be a polarization index induced by an antagonism function $\theta(\pi,d)=\pi^\alpha f(d)$ with $\alpha \geq 0$, $f(0)=0$, and $f$ non-decreasing. Then $P$ satisfies Axiom 3 if and only if, for every $d \in \mathbb{R}_{++}$ and every $p,q \in \mathbb{N}$ with $p>0$ and $q \ge 2$,
\begin{equation} \label{inequation:Axiom3}
	\frac{f(2d)}{f(d)} > g(p,q,\alpha).
\end{equation}
\end{proposition}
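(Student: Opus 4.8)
The plan is to reduce the general size-$\Delta$ transfer appearing in Axiom 3 to a chain of unit transfers ($\Delta=1$), for which the strict inequality $P_i<P_f$ is exactly the stated condition $\tfrac{f(2d)}{f(d)}>g(p,q,\alpha)$. First I would compute both polarization levels explicitly. Since $x=y-x$ we have $y=2x$, so writing $d=x$ the three pairwise distances are $d$, $d$, and $2d$; using $\theta(\pi,d)=\pi^{\alpha}f(d)$, the polarization of a symmetric three-group distribution is
\[
P\bigl((p,q,p),(0,x,y)\bigr)=2\bigl(p^{\alpha+1}q+pq^{\alpha+1}\bigr)f(d)+2p^{\alpha+2}f(2d).
\]
Applying this to the pre-transfer configuration $(p,q,p)$ and to the post-transfer configuration $(p+1,q-2,p+1)$ and subtracting, the coefficient of $f(2d)$ becomes $2\bigl[(p+1)^{\alpha+2}-p^{\alpha+2}\bigr]$, which is strictly positive since $\alpha\ge0$ and $p+1>p>0$. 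Dividing the inequality $P_f-P_i>0$ by this positive quantity and by $f(d)>0$ shows that a unit transfer strictly increases polarization if and only if $\tfrac{f(2d)}{f(d)}>g(p,q,\alpha)$. This identifies the $\Delta=1$ instance of Axiom 3 at the pair $(p,q)$ with the stated inequality.

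For sufficiency I would telescope. A transfer of size $\Delta$ factors as a composition of $\Delta$ unit transfers through the intermediate configurations $(p+s,q-2s,p+s)$ for $s=0,1,\dots,\Delta$. Writing $\Phi(s)$ for the polarization of the $s$-th configuration, the total change splits as $\Phi(\Delta)-\Phi(0)=\sum_{s=0}^{\Delta-1}\bigl(\Phi(s+1)-\Phi(s)\bigr)$, and by the computation above each summand is strictly positive precisely when $\tfrac{f(2d)}{f(d)}>g(p+s,q-2s,\alpha)$. Because $\Delta<q/2$ forces $q-2\Delta\ge1$, every intermediate group size $q-2s$ stays at least $3$, so all these pairs lie in the range $q\ge2$ covered by the hypothesis; hence each term is positive, the sum is positive, and Axiom 3 holds.

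For necessity, the key observation is that the $\Delta=1$ case of Axiom 3 \emph{is} a unit transfer, so applying the axiom with $\Delta=1$ and arbitrary admissible $(p,q)$ and $x>0$ returns exactly $\tfrac{f(2d)}{f(d)}>g(p,q,\alpha)$ for every $d>0$. No cancellation among steps can hide a violation, because each unit transfer is independently an instance of the axiom, so the equivalence does not require the full transfer to be more than the sum of its parts.

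The main obstacle, and the step that does not follow purely from the telescoping, is matching the range of $(p,q)$ on the two sides. Admissibility of $\Delta=1$ under the open condition $\Delta\in(0,q/2)$ forces $q\ge3$, whereas the characterization quantifies over $q\ge2$. I would resolve the endpoint $q=2$ by treating the degenerate transfer that empties the central group (equivalently, admitting $\Delta=q/2$): when $q-2=0$ the terms $(p+1)^{\alpha+1}(q-2)$ and $(p+1)(q-2)^{\alpha+1}$ vanish and $g(p,2,\alpha)$ reduces to $\bigl(2p^{\alpha+1}+2^{\alpha+1}p\bigr)/\bigl((p+1)^{\alpha+2}-p^{\alpha+2}\bigr)$, and I would check that this boundary case is consistent with the resulting two-group distribution. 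Confirming this endpoint, together with the positivity of the denominator that keeps the inequality direction intact throughout the telescoping, is where the argument needs the most care.
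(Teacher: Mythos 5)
Your core computation coincides with the paper's proof: the paper likewise evaluates $P_i$ and $P_f$ for the transfer of exactly two individuals (the unit transfer $\Delta=1$) at pairwise distances $d,d,2d$, uses the positivity of the coefficient $2\bigl[(p+1)^{\alpha+2}-p^{\alpha+2}\bigr]$ of $f(2d)$ together with $f(d)>0$, and reads off $f(2d)/f(d)>g(p,q,\alpha)$. Where you go beyond the printed proof is the sufficiency direction: the paper simply asserts that Axiom~3 holds ``if and only if $P_f-P_i>0$'' for the unit transfer, leaving the reduction of a general $\Delta$-transfer implicit, whereas your telescoping through the configurations $(p+s,q-2s,p+s)$, with the check that each step invokes the hypothesis at a pair with middle size $q-2s\ge 3\ge 2$, actually supplies the missing argument. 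That part of your plan is correct and more complete than the paper's.

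The genuine problem is the endpoint you flagged, and your proposed fix does not work as stated. Since $\Delta\in(0,q/2)\cap\mathbb{N}$ and $(0,1)$ contains no positive integer, Axiom~3 is \emph{vacuous} at $q=2$, so no instance of the axiom yields $f(2d)/f(d)>g(p,2,\alpha)$; ``admitting $\Delta=q/2$'' is a modification of the axiom, not a derivation from it, and the resulting vector $(p+1,0,p+1)$ is not even an admissible distribution in the paper's domain (populations lie in $\mathbb{N}_{++}$), so the degenerate case would further require reinterpreting it as a two-group distribution, which the axiom does not license. You should know the paper's own proof shares this blind spot: it computes the two-individual transfer ``for all $p>0$ and $q\ge2$'' without noting that $q=2$ lies outside the axiom's reach. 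An honest repair is either to restate the proposition with $q\ge3$ (which, by your own telescoping, is all that sufficiency ever uses), or to prove that $g(p,2,\alpha)$ is dominated by some attained value $g(p',q',\alpha)$ with $q'\ge3$, so that the strict inequality obtained for $q\ge3$ transfers to $q=2$; for instance, at $\alpha=1$ the numerators over the common denominator are $2p^2+4p$ versus $2p^2+6p-2$, giving $g(p,2,1)\le g(p,3,1)$ for all $p\ge1$. As written, your proposal is the paper's argument with the sufficiency step done properly and the $q=2$ gap candidly labelled rather than closed.
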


\begin{proof}
Let $(p,q,p)$ be a population vector with $q \ge 2$, $p>0$, and $d>0$, as in the statement of Axiom 3. The initial (left-hand side of Figure~\ref{figureAxiom3}) and final (right-hand side) polarizations after transferring exactly two individuals are:
\[
\begin{aligned}
	P_i &= 2 \big[pq (p^{\alpha} + q^{\alpha}) f(d) + p^{\alpha+2} f(2d)\big], \\
	P_f &= 2 \big[(p+1)(q-2)((p+1)^{\alpha} + (q-2)^{\alpha})f(d) + (p+1)^{\alpha+2} f(2d)\big].
\end{aligned}
\]
Thus, the polarization measure induced by $\theta$ satisfies Axiom 3 if and only if $P_f - P_i > 0$ for all $p>0$ and $q \ge 2$.

This inequality is equivalent to
\[
\frac{f(2d)}{f(d)} >
\frac{p^{\alpha+1} q + p q^{\alpha+1} - (p+1)^{\alpha+1} (q-2) - (p+1) (q-2)^{\alpha+1}}
{(p+1)^{\alpha+2} - p^{\alpha+2}},
\]
which proves the claim.
\end{proof}

Although we have shown that Axiom 1 concerns only identification and Axiom 2 concerns only alienation, Proposition~\ref{Axiom3-Characterization} shows that Axiom 3 connects the two components of antagonism functions. The left-hand side of inequality~\eqref{inequation:Axiom3} depends on the alienation function $f$, whereas the right-hand side depends solely on $\alpha$, the parameter defining the identification function. It is not immediately clear when inequality~\eqref{inequation:Axiom3} holds. However, in Section~\ref{new} we will show that the set of identification and alienation functions for which it does is quite large, leading to new polarization indices not considered by ER.

\section{New Polarization Indices that Satisfy the ER Model}\label{new}

In the previous two sections, we provided necessary and sufficient conditions for antagonism functions that induce indices satisfying Condition H and each of ER’s axioms. The combination of Condition H and Axioms~1 and~2 is equivalent to requiring that $\theta(\pi,d)=\pi^\alpha f(d)$ with $\alpha>0$ and $f$ a non-decreasing, convex function satisfying $f(0)=0$. However, it is not clear for which values of $\alpha$ the necessary and sufficient condition for Axiom 3, given in Proposition~\ref{Axiom3-Characterization}, holds depending on the choice of $f$, and conversely. The aim of this section is to shed light on this question.\\

We begin with the case $\alpha=1$, which is of particular interest in the literature. It is straightforward to verify that $\sup_{p>0,\,q\ge2} g(p,q,1)=1$ and that there are no $p,q$ satisfying the hypotheses of Proposition~\ref{Axiom3-Characterization} such that $g(p,q,1)=1$. This implies that an antagonism function of the form $\theta(\pi,d)=\pi f(d)$ satisfies Axiom 3 if and only if
\[
\frac{f(2d)}{f(d)} \ge 1.
\]
This inequality is trivially satisfied for any non-decreasing function $f(d)$. Hence, we obtain the following examples of antagonism functions that satisfy all the axioms proposed by ER.

\begin{example}
The following antagonism functions induce polarization indices satisfying Condition H and Axioms~1, 2, and~3:
\begin{itemize}
    \item $\theta(\pi, d)=\pi d$;
    \item $\theta(\pi, d)=\pi d^r$ with $r\ge 1$;
    \item $\theta(\pi, d)=\pi(\mathrm{e}^d-1)$;
    \item $\theta(\pi,d)=\pi f(d)$, where $f(d)$ is any continuous, convex, and non-decreasing function satisfying $f(0)=0$.
\end{itemize}
\end{example}

These examples already show that, for $\alpha=1$, there exist many antagonism functions capturing the idea of polarization beyond the single one identified by ER.

Next, we examine whether, as in ER, one can find an interval for $\alpha$ such that Axiom 3 is satisfied for any antagonism function of the form $\theta(\pi,d)=\pi^\alpha f(d)$ with $\alpha>0$ and $f(d)$ continuous, convex, and non-decreasing (that is, satisfying Axioms~1 and~2). We show that the interval found by ER, namely $(0,\alpha^*]$ with $\alpha^*\approx1.6$, is in fact the universal interval for which any antagonism function satisfying Condition H and Axioms~1 and~2 also satisfies Axiom 3.

\begin{proposition}\label{Axiom3-minimal-interval}
Let $\theta(\pi,d)=\pi^\alpha f(d)$ with $f(d)$ convex, non-decreasing, and satisfying $f(0)=0$. Then, if $\alpha\in(0,\alpha^*]$, $\theta$ induces a polarization index that satisfies Condition H and Axioms~1, 2, and~3.
\end{proposition}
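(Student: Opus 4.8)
The plan is to dispatch Condition H and Axioms 1 and 2 immediately from the results already established, and then to concentrate all the work on Axiom 3, which I will reduce to a single scalar inequality about the exponent $\alpha$. By Theorem~\ref{H-increasing}, every $\theta(\pi,d)=\pi^\alpha f(d)$ with $\alpha\ge0$ induces an index satisfying Condition H, so that part is automatic. Since $\alpha\in(0,\alpha^*]$ forces $\alpha>0$, Proposition~\ref{axiom1-characterization} yields Axiom 1, and since $f$ is convex, Proposition~\ref{axiom2-characteriation} yields Axiom 2. Thus the entire content of the proposition is Axiom 3, which by Proposition~\ref{Axiom3-Characterization} holds if and only if $\tfrac{f(2d)}{f(d)}>g(p,q,\alpha)$ for every $d>0$ and all integers $p\ge1$, $q\ge2$.

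The decisive observation is that $f$ enters this condition only through the ratio $f(2d)/f(d)$, and convexity pins down a uniform lower bound for it. Writing $d=\tfrac12(2d)+\tfrac12\cdot0$ and using convexity together with $f(0)=0$ gives $f(d)\le\tfrac12 f(2d)$, hence $f(2d)/f(d)\ge2$ for every $d>0$ (recall $f(d)>0$ for $d>0$). It therefore suffices to prove the purely arithmetic statement
\[
g(p,q,\alpha)<2\qquad\text{for all integers }p\ge1,\ q\ge2
\]
whenever $\alpha\in(0,\alpha^*]$: once this holds, $f(2d)/f(d)\ge2>g(p,q,\alpha)$ for every admissible $d,p,q$, and Axiom 3 follows. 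This is exactly the step where the discreteness of the population is used, since the inequality need only be checked on the integer lattice $\{p\ge1,\ q\ge2\}$, never on a continuum.

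To control $g$, I would first read off from the formulas for $P_i$ and $P_f$ in Proposition~\ref{Axiom3-Characterization} that the denominator of $g$ equals $(p+1)^{\alpha+2}-p^{\alpha+2}>0$; consequently $g(p,q,\alpha)<2$ is equivalent to $P_f>P_i$ for the canonical linear index $f(d)=d$ (for which $f(2d)/f(d)=2$), i.e.\ it is precisely Esteban and Ray's Axiom~3 for linear alienation and exponent $\alpha$. I would then locate the supremum of $g$ over the lattice by sending $p,q\to\infty$ along the rays $q=\beta p$: the leading $p^{\alpha+2}$ terms of numerator and denominator cancel, and a one-order expansion gives
\[
g(p,\beta p,\alpha)\longrightarrow h(\beta;\alpha):=\frac{2-(\alpha+1)\beta+2(\alpha+1)\beta^{\alpha}-\beta^{\alpha+1}}{\alpha+2}.
\]
Since $h(1;\alpha)=1$ for every $\alpha$, one defines $\alpha^*$ by $\max_{\beta>0}h(\beta;\alpha)=2$; solving the critical-point equation $1+\beta^{\alpha}=2\alpha\beta^{\alpha-1}$ jointly with $h=2$ reproduces Esteban and Ray's transcendental condition and their numerical value $\alpha^*\approx1.6$. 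The argument is then completed by showing that $\max_\beta h(\beta;\alpha)$ is non-decreasing in $\alpha$, so that $\max_\beta h(\beta;\alpha)\le2$ for all $\alpha\le\alpha^*$, and that this asymptotic value is never attained at a finite $(p,q)$, which upgrades the conclusion to the strict $g(p,q,\alpha)<\max_\beta h(\beta;\alpha)\le2$.

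The main obstacle lies in making this supremum computation rigorous: one must show that $\sup_{p\ge1,\,q\ge2}g(p,q,\alpha)$ equals the ray limit $\max_{\beta>0}h(\beta;\alpha)$—that no finite lattice point overshoots it—and that $\max_\beta h(\beta;\alpha)$ is non-decreasing in $\alpha$ with value $2$ at $\alpha^*$. For $\alpha=1$ both points are transparent, since one computes $g(p,q,1)=1-\frac{(p-q)^2+3(p-q)+3}{3p^2+3p+1}<1$ with supremum $1$ approached only in the limit. For general $\alpha$ the explicit factorization is unavailable, so I would instead exploit that, along each constant-population line $2p+q=N$, the increment $\Phi(p+1,q-2)-\Phi(p,q)$ of the linear-$f$ half-polarization $\Phi(p,q)=p^{\alpha+1}q+pq^{\alpha+1}+2p^{\alpha+2}$ is governed by the convexity of $t\mapsto t^{\alpha+2}$, and combine this with an envelope computation of $\partial_\alpha\max_\beta h$. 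Establishing these two monotonicities—and verifying that the binding ray stays in the interior $\beta>0$ rather than migrating to the boundary $q=2$—is the analytic heart of the proof, while everything preceding it is bookkeeping.
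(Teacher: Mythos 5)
Your proposal follows the same skeleton as the paper's own (explicitly labelled) \emph{sketch} of proof: Condition H, Axiom 1 and Axiom 2 are dispatched exactly as you do, via Theorem~\ref{H-increasing} and Propositions~\ref{axiom1-characterization} and~\ref{axiom2-characteriation}, and the key reduction is identical---convexity together with $f(0)=0$ gives $f(2d)/f(d)\ge 2$ for all $d>0$, so Axiom~3 follows once $g(p,q,\alpha)<2$ holds on the lattice $p\ge 1$, $q\ge 2$ whenever $\alpha\in(0,\alpha^*]$. Where you genuinely diverge is in how that scalar inequality is certified: the paper simply states that ``numerical computations show'' $M_\alpha=\sup_{p,q}g(p,q,\alpha)<2$ exactly for $\alpha\in(0,\alpha^*]$, whereas you sketch an analytic justification. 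Your computations there are sound as far as they go: the ray limit $g(p,\beta p,\alpha)\to h(\beta;\alpha)$ with $h(\beta;\alpha)=\bigl(2-(\alpha+1)\beta+2(\alpha+1)\beta^{\alpha}-\beta^{\alpha+1}\bigr)/(\alpha+2)$, the identity $h(1;\alpha)=1$, the critical-point equation $1+\beta^{\alpha}=2\alpha\beta^{\alpha-1}$, and the exact factorization $g(p,q,1)=1-\frac{(p-q)^2+3(p-q)+3}{3p^2+3p+1}<1$ all check out. This buys a path to a fully rigorous proof and recovers ER's characterization of $\alpha^*$, but, as you yourself flag, the two load-bearing claims---that the lattice supremum equals $\max_{\beta>0}h(\beta;\alpha)$ with no overshoot at finite $(p,q)$, and that this maximum is monotone in $\alpha$---are left unproven, which is precisely the point at which the paper also stops (by appeal to numerics). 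One small discrepancy worth recording: the paper asserts $M_{\alpha^*}<2$, while on your picture $M_{\alpha^*}=2$ but unattained at any lattice point; both versions deliver the strict pointwise inequality $g(p,q,\alpha^*)<2$ that Proposition~\ref{Axiom3-Characterization} requires, so the inclusion of the endpoint $\alpha^*$, and hence the proposition, is unaffected either way.
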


\begin{proof}[Sketch of proof]
Since $f$ is convex and $f(0)=0$, using again the decrease of the slope of the secants \cite[Theorem~7.5]{Sundaram_1996} we have, for every $d>0$,
\[
\frac{f(2d)}{f(d)} \ge 2.
\]
Hence, if for a given $\alpha$ it holds that $M_\alpha:=\sup_{p>0,\,q\ge2} g(p,q,\alpha)<2$, then any index induced by $\theta(\pi,d)=\pi^\alpha f(d)$ with $f(d)$ continuous, convex, non-decreasing, and $f(0)=0$ will satisfy Proposition~\ref{Axiom3-Characterization}. Numerical computations show that $M_\alpha<2$ if and only if $\alpha\in(0,\alpha^*]$, where $\alpha^*\approx1.6$. These bounds coincide with those computed by ER for the specific case $\theta(\pi,d)=\pi^\alpha d$.
\end{proof}

As an illustration of Proposition~\ref{Axiom3-minimal-interval}, we have the following:

\begin{example}
For $\alpha\in(0,\alpha^*]$ with $\alpha^*\approx1.6$, the following antagonism functions induce polarization indices that satisfy Condition H and Axioms~1, 2, and~3:
\begin{itemize}
    \item $\theta(\pi, d)=\pi^\alpha d^r$ with $r\ge1$;
    \item $\theta(\pi, d)=\pi^\alpha P(d)$, where $P$ is any polynomial with non-negative coefficients and zero constant term;
    \item $\theta(\pi, d)=\pi^\alpha(\mathrm{e}^{kd}-1)$ with $k>0$.
\end{itemize}
\end{example}

Proposition~\ref{Axiom3-minimal-interval} therefore shows that the interval found by ER, $\alpha\in(0,\alpha^*]$ with $\alpha^*\approx1.6$, is universal for antagonism functions whose induced indices satisfy Axioms~1–3. However, for certain explicit alienation functions, the feasible values of $\alpha$ can exceed $\alpha^*$. For instance, if $f(d)=d^2$, then the necessary and sufficient condition in Proposition~\ref{Axiom3-Characterization} becomes
\[
\frac{f(2d)}{f(d)} = 4 > g(p,q,\alpha) \quad \text{for every } p>0 \text{ and } q\ge2.
\]
Numerical computations show that this condition holds for $\alpha \in [0,\alpha^{**}]$, where $\alpha^{**}\approx1.9$. Note that in the special case $\alpha=0$, the antagonism function $\theta(\pi,d)=d^2$ satisfies Axioms~2 and~3 but not Axiom 1.\footnote{A detailed study of the computation of feasible $\alpha$-intervals is available under request.}

\begin{remark}
The new indices obtained are of the form $\theta(\pi,d)=\pi^\alpha f(d)$, where $f$ is a continuous, non-decreasing, convex transformation of the distance. One could think that these new indices are equivalent to those obtained by ER with the Euclidean distance replaced by $f(d)$. However, $f(d)$ is generally not a distance, as it fails to satisfy the triangular inequality except when $f$ is linear.

A related question arises: do these new indices coincide with the ER index applied to a suitable transformation $g(y)$ of the underlying characteristic variable $Y$—for instance, $g(y)=\log(y)$, as in ER? The answer, again, is negative, except in the case where $f$ is linear.
\end{remark}

\section{Comparing the Performance of ER Indices and the New Indices}\label{expl}

\begin{figure}[ht]
	\centering
	\includegraphics[width=0.45\textwidth]{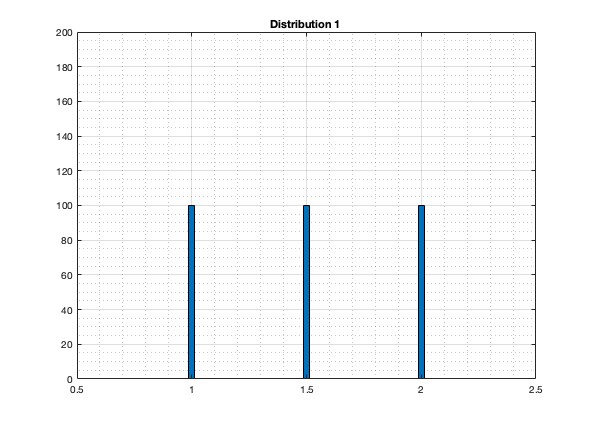}
	\includegraphics[width=0.45\textwidth]{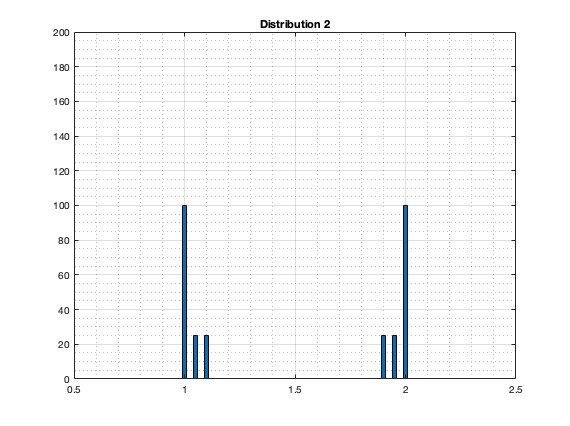}
	\caption{Two distributions with the same number of individuals.}	
	\label{Example1}
\end{figure}

Consider the two distributions in Figure~\ref{Example1}.  
Which of these is more polarized?  
A reader familiar with the ER paper will recognize them as a slight variation of the setting in Axiom 3.  
Both distributions contain the same number of individuals, and Distribution~2 is obtained by symmetrically transferring individuals initially in the central group to positions \emph{very close}\footnote{The maximum distance selected is $0.1$, merely to make the figures legible.} to the extreme groups.  
Following the argument in ER—namely, that \emph{``the disappearance of a middle class into the rich and poor categories must increase polarization''} \cite[p.~833]{10.2307/2951734}—one would expect Distribution~2 to be more polarized.  
However, using the classical ER index, Distribution~1 is actually more polarized than Distribution~2 for $\alpha \ge 0.3$.  
This may come as a surprise to practitioners who use these indices.  
Only when $\alpha$ is very small, and hence the ER index approaches the Gini index, does the distribution on the right become more polarized.  
In particular, for the standard value $\alpha = 1$, used in many empirical papers, this  does not occur.

By contrast, if one uses the antagonism function\footnote{The same result holds for $f(d) = d^n$, $n \ge 2$, or $f(d) = e^{k d} - 1$ with $k \ge 1$.} $\theta(\pi, d) = \pi^\alpha d^2$, the induced polarization index classifies Distribution~2 as more polarized than Distribution~1 for all $\alpha \in (0, \alpha^*]$, the universal interval for alienation functions satisfying all the axioms.

To illustrate that this is not a special case involving only three equal groups, consider the following example:

\begin{figure}[ht]
	\centering
	\includegraphics*[width=0.45\textwidth]{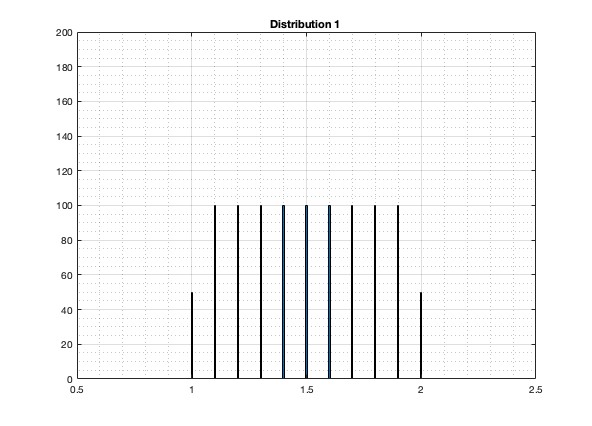}
	\includegraphics[width=0.45\textwidth]{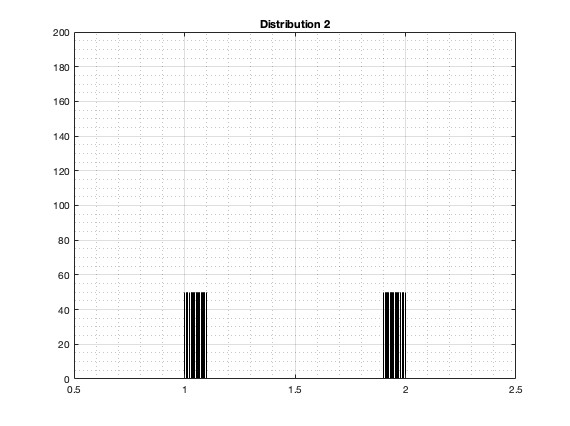}
\end{figure}

Both distributions contain the same number of individuals, but in Distribution~2 all individuals from the groups of $100$ members are now concentrated around $1$ and $2$, with each of these new clusters containing $45$ individuals.  
Following ER’s introduction, one might again think that Distribution~2 should be more polarized.  
Indeed, there are two clear clusters around $1$ and $2$, with high intra-group homogeneity (the maximum within-cluster distance is $md = 0.2$), and higher inter-group heterogeneity than in Distribution~1.  
Moreover, the ``middle class'' has disappeared.  
Nevertheless, this is not the case when using the ER indices for $\tilde{\alpha} > 0.5$.  
That is, except when, in ER’s words, the ``polarization sensitivity'' is very small and far from the standard value $\alpha = 1$.  
In fact, as the intra-group distance $md$ becomes very small, the critical value $\tilde{\alpha}$ stabilizes around $0.66$.

In contrast, the polarization index induced by $\theta(\pi, d) = \pi^\alpha d^2$ classifies Distribution~2 as more polarized than Distribution~1 for $\tilde{\alpha} < 1.45$, that is, except when the identification factor is extremely strong.  
Moreover, as $md$ decreases, this threshold $\tilde{\alpha}$ disappears, and Distribution~2 is classified as more polarized for all $\alpha \in (0, \alpha^*]$.

These are not exceptional cases.  
It is easy to construct further examples in which the orderings induced by ER appear counterintuitive.  
In our view, the constant marginal effect of distance in the ER index cannot compensate for the increase in the number of groups in such cases.

Practitioners should be aware of these, in our opinion, undesirable features of the classical ER indices.  
They must decide whether adopting an alienation function with a constant marginal effect is consistent with their beliefs about how antagonism operates in the phenomenon under study.

\section{Concluding Remarks}

In 1994, Esteban and Ray presented their model for measuring polarization based on effective antagonism functions. Despite its a priori appeal and strong axiomatic foundation, their characterization theorem \emph{``dramatically narrows the class of allowable polarization measures''}. This result rests heavily on a seemingly innocuous assumption: that individuals can be infinitely divisible. By removing this unnecessary assumption, we have shown that, within the original model and axiomatization, the concept of polarization admits a broader family of measures. We have also provided examples in which the original ER indices yield counter-intuitive classifications of societies, while most of the new indices we propose resolve these inconsistencies. A natural future work would be to find new, equally reasonable, axioms that restrict the set of polarization indices compatible with them in order to avoid these counter-intuitive effects.

Furthermore, we have derived necessary and sufficient conditions for antagonism functions satisfying Condition H, and, subsequently, for each of the axioms in ER in terms of identification and alienation. This analysis clarifies the precise effect of each axiom on the functional form of the effective antagonism function. In this way, our results translate the axiomatic structure proposed by ER from its macro formulation (in terms of distributional properties) into a micro formulation (in terms of the properties of antagonism) in which the model was originally conceived.

Also for future research, we suggest revisiting the continuous version of the ER model developed by Duclos, Esteban, and Ray~\cite{2990f903-a1fe-36dd-8511-1b59b65a89d2}. In that framework, the authors replace the unnormalized discrete distribution of individuals across finitely many groups with an unnormalized density function, so that the integral of the density over its support equals the total number of individuals. Density functions are introduced there to incorporate uncertainty in measuring the characteristic values across a population. However, a close reading of the proof of their main result, \cite[Theorem~1]{2990f903-a1fe-36dd-8511-1b59b65a89d2}, reveals that, as in ER, the argument relies crucially on the assumption that the total number of individuals in a well-defined subgroup can be made arbitrarily small. In their terminology, the total number of individuals in a \emph{basic density} can be infinitesimal.\footnote{This appears to contradict their statement in the introduction: \emph{``we do not mean to suggest that instances in which a single isolated individual runs amok with a machine gun are rare, or that they are unimportant in the larger scheme of things. It is just that these are not the objects of our enquiry.''}} As in the discrete case, this seems to be a simplifying assumption. We conjecture that, once it is relaxed, new polarization indices will emerge that also satisfy the axioms of the continuous model.

\nocite{*}
\bibliographystyle{plainnat}
\bibliography{bibliography}
\end{document}